% Additive monotones for resource theories 
% QPL Proceedings final version
% Last update: 2/11/15 BF (update Coecke Fritz Spekkens reference)

\documentclass[submission,copyright,creativecommons]{eptcs}
 % Name of the event you are submitting to

%---------------- packages ---------------------
\usepackage[centertags,sumlimits,intlimits,namelimits,reqno]{amsmath}
\usepackage{latexsym,amsfonts,amssymb,amsthm,exscale,enumerate,breakcites}
\usepackage[all]{xy}
\let\originallhook\lhook
\usepackage{MnSymbol}
\newcommand{\lhook}{\mathrel{\raise.018ex\hbox{$\originallhook$}}}
\usepackage[T1]{fontenc}
\usepackage{tikz}

% ---------------- tikz -------------------------
\pgfdeclarelayer{edgelayer}
\pgfdeclarelayer{nodelayer}
\pgfsetlayers{edgelayer,nodelayer,main}
\tikzstyle{none}=[inner sep=0mm]

%----------------- environments -----------------
\newtheorem{defn}{Definition}
\newtheorem{thm}[defn]{Theorem}
\newtheorem{prop}[defn]{Proposition}
\newtheorem{lem}[defn]{Lemma}
\newtheorem{cor}[defn]{Corollary}
\theoremstyle{remark}

\newtheorem{rmk}[defn]{Remark}

%----------------- sequences ---------------------
\newcommand{\hooklongrightarrow}{\lhook\joinrel\longrightarrow}

\newcommand{\Ca}{\mathbf{C}}
\newcommand{\Cafree}{\Ca_{\textrm{free}}}
\newcommand{\re}{\mathbb{R}}

\newcommand{\N}{\mathbb{N}}

\newcommand{\Set}{\mathbf{Set_\times}}
\newcommand{\SetU}{\mathbf{Set}}
\newcommand{\InjU}{\mathbf{Inj}}
\newcommand{\BijU}{\mathbf{Bij}}
\newcommand{\Rel}{\mathbf{Rel_\times}}

\DeclareMathOperator{\mor}{Mor}
\DeclareMathOperator{\pcd}{PCD}

%-------------------------------------------------------

\title{Additive monotones for resource theories of parallel-combinable processes
with discarding}
\author{Brendan Fong
\institute{Department of Computer Science \\ University of Oxford}
\email{brendan.fong@cs.ox.ac.uk}
\and
Hugo Nava-Kopp
\institute{Department of Computer Science \\ University of Oxford}
\email{hugo.nava.kopp@cs.ox.ac.uk}
}

\begin{document}
\maketitle

\begin{abstract}
  A partitioned process theory, as defined by Coecke, Fritz, and Spekkens, is a
  symmetric monoidal category together with an all-object-including symmetric
  monoidal subcategory. We think of the morphisms of this category as processes,
  and the morphisms of the subcategory as those processes that are freely
  executable.  Via a construction we refer to as parallel-combinable processes
  with discarding, we obtain from this data a partially ordered monoid on the
  set of processes, with $f \succeq g$ if one can use the free processes to
  construct $g$ from $f$. The structure of this partial order can then be probed
  using additive monotones: order-preserving monoid homomorphisms with values in
  the real numbers under addition. We first characterise these additive
  monotones in terms of the corresponding partitioned process theory.

  Given enough monotones, we might hope to be able to reconstruct the order on
  the monoid. If so, we say that we have a complete family of monotones. In
  general, however, when we require our monotones to be additive monotones, such
  families do not exist or are hard to compute. We show the existence of
  complete families of additive monotones for various partitioned process
  theories based on the category of finite sets, in order to shed light on the
  way such families can be constructed.
\end{abstract}

\section{Introduction}

In \cite{CFS2}, Coecke, Fritz, and Spekkens make a well-illustrated case for
viewing symmetric monoidal categories as theories of resources: the objects of
the category are interpreted as resources, the morphisms methods for converting
one resource into another. In many examples, such as quantum entanglement, the
resources themselves are processes, and the methods of converting one process
into another involve composition with a set of `freely executable' processes.
This structure is formalised as a partitioned process theory: a pair comprising
a symmetric monoidal category together with an all-object-including symmetric
monoidal subcategory.

The question then arises: when can we build one resource from another? One
technique for answering this question is to assign real numbers to each
resource, according to their power to create other resources. We call such
functions monotones. A collection of monotones that completely characterises
this structure is known as a complete family of monotones. While complete
families of monotones always exist, ones with nice properties are in general
hard to come by, with even celebrated ones, such as entropy or Kolmogorov
complexity, being difficult to compute.

In this article we explore the construction of families of so-called additive
monotones---monotones that are also monoid homomorphisms into the real numbers
under addition. Our main result fixes a particular method of building a resource
theory from a partitioned process theory, and characterises the additive
monotones on the resulting resource theory. We then explore two applications of
this theorem, using it to construct complete families of additive monotones.

\section{Resource theories of parallel-combinable processes with discarding}

We formalise our ideas about resources and free processes using symmetric
monoidal categories, following the work of Coecke, Fritz, and Spekkens
\cite{CFS1, CFS2}. We say that a subcategory $\mathbf D$ of a category $\Ca$
is a \emph{wide} subcategory if it includes all the objects of $\Ca$.

\begin{defn}\label{def:ResourceTheory} 
  A \emph{partitioned process theory} $(\Ca, \Cafree)$ consists of a symmetric
  monoidal category $\Ca$ together with a wide symmetric monoidal subcategory
  $\Cafree$.
\end{defn}

With this definition we see that examples of partitioned process theories
abound, not only describing structures such as entanglement and athermality
arising in applied sciences \cite{BHORS, HHHH}, but also simply natural ideas
within mathematics itself. For example, any category $\Ca$ with products can be
considered a symmetric monoidal category with monoidal product given by the
categorical product, and in such a category each object is equipped with a
commutative comonoid structure. We may generate a wide symmetric monoidal
subcategory from these comonoid morphisms to serve as our category $\Cafree$. A
similar, topical \cite{Ki,Fo,Mo}, example arises from the analogous construction
on so-called multigraph categories---categories in which every object is
equipped with a special commutative Frobenius monoid.

We caution that a partitioned process theory was simply termed a resource theory
in \cite{CFS1}; our terminology comes from \cite{CFS2}, and following
\cite{CFS2} we instead use \emph{resource theory} to simply refer to a symmetric
monoidal category in which we think of the objects as resources. In line with
this viewpoint, we shall refer to the morphisms of $\Ca$ as \emph{processes} and
the morphisms of $\Cafree$ as \emph{free processes}. We then can construct a
resource theory from a partitioned process theory by considering the processes
as resources.

Indeed, given a partitioned process theory $(\Ca,\Cafree)$, we can construct a
symmetric monoidal category in which the objects are the processes $\Ca$, and
the morphisms are methods of constructing one process from another using free
processes. We term this new category the \emph{resource theory of
parallel-combinable processes with discarding} $\mathrm{PCD}(\Ca,\Cafree)$ of
the partitioned process theory $(\Ca,\Cafree)$.

We shall assume all categories here are small, and write $\lvert \Ca \rvert$ and
$\mor(\Ca)$ for the sets of objects and morphisms of a category $\Ca$
respectively.

\begin{prop}\label{def:parallelResourceDefinition}
  Let $(\Ca,\Cafree)$ be a partitioned process theory. Then we may define a
  symmetric monoidal category $\pcd(\Ca,\Cafree)$ with objects $f \in \mor(\Ca)$, and
  morphisms $f \to g$ triples $(Z,\xi_1,\xi_2)$, where $Z \in
  \lvert\Ca\rvert$, $\xi_1,\xi_2 \in \mor(\Cafree)$, such that there exists $j
  \in \mor(\Ca)$ with 
  \begin{equation} \label{eq:parallelResourceEquation}
    \xi_2 \circ (f \otimes 1_Z) \circ \xi_1 = g \otimes j.
  \end{equation}
  In string diagrams equation (\ref{eq:parallelResourceEquation}) becomes:
  \[
    \begin{tikzpicture}
      \begin{pgfonlayer}{nodelayer}
	\node [style=none] (0) at (-0.5, 1.25) {};
	\node [style=none] (1) at (-0.5, 0.75) {};
	\node [style=none] (2) at (0.5, -0.5) {};
	\node [style=none] (3) at (0.5, 1.25) {};
	\node [style=none] (4) at (-1, 2) {};
	\node [style=none] (5) at (-1, 1.25) {};
	\node [style=none] (6) at (1, 1.25) {};
	\node [style=none] (7) at (1, 2) {};
	\node [style=none] (8) at (-0.5, 2) {};
	\node [style=none] (9) at (0.5, 2) {};
	\node [style=none] (10) at (-1, 0.75) {};
	\node [style=none] (11) at (0, 0.75) {};
	\node [style=none] (12) at (-0.5, 2.5) {};
	\node [style=none] (13) at (0.5, 2.5) {};
	\node [style=none] (14) at (0, -0) {};
	\node [style=none] (15) at (-1, -0) {};
	\node [style=none] (16) at (-1, -0.5) {};
	\node [style=none] (17) at (1, -0.5) {};
	\node [style=none] (18) at (1, -1.25) {};
	\node [style=none] (19) at (-1, -1.25) {};
	\node [style=none] (20) at (-0.5, -0.5) {};
	\node [style=none] (21) at (-0.5, -0) {};
	\node [style=none] (22) at (-0.5, -1.25) {};
	\node [style=none] (23) at (-0.5, -1.75) {};
	\node [style=none] (24) at (0.5, -1.25) {};
	\node [style=none] (25) at (0.5, -1.75) {};
	\node [style=none] (26) at (0, 1.5) {};
	\node [style=none] (27) at (1.5, 0.25) {};
	\node [style=none] (28) at (2, -0) {};
	\node [style=none] (29) at (3, -0) {};
	\node [style=none] (30) at (2, 0.75) {};
	\node [style=none] (31) at (3, 0.75) {};
	\node [style=none] (32) at (3.5, 0.75) {};
	\node [style=none] (33) at (3.5, -0) {};
	\node [style=none] (34) at (4.5, -0) {};
	\node [style=none] (35) at (4.5, 0.75) {};
	\node [style=none] (36) at (2.5, 0.75) {};
	\node [style=none] (37) at (4, 0.75) {};
	\node [style=none] (38) at (4, -0) {};
	\node [style=none] (39) at (2.5, -0) {};
	\node [style=none] (40) at (2.5, 1.25) {};
	\node [style=none] (41) at (4, 1.25) {};
	\node [style=none] (42) at (2.5, -0.5) {};
	\node [style=none] (43) at (4, -0.5) {};
	\node [style=none] (44) at (0, 1.6) {$\xi_2$};
	\node [style=none] (45) at (0, -0.9) {$\xi_1$};
	\node [style=none] (46) at (-0.5, 0.35) {$f$};
	\node [style=none] (47) at (2.5, 0.35) {$g$};
	\node [style=none] (48) at (4, 0.35) {$j$};	
	\node [style=none] (49) at (1.5, 0.35) {$=$};	

      \end{pgfonlayer}
      \begin{pgfonlayer}{edgelayer}
	\draw (0.center) to (1.center);
	\draw (3.center) to (2.center);
	\draw (4.center) to (5.center);
	\draw (5.center) to (6.center);
	\draw (6.center) to (7.center);
	\draw (7.center) to (4.center);
	\draw (10.center) to (11.center);
	\draw (14.center) to (11.center);
	\draw (14.center) to (15.center);
	\draw (15.center) to (10.center);
	\draw (8.center) to (12.center);
	\draw (9.center) to (13.center);
	\draw (16.center) to (17.center);
	\draw (18.center) to (17.center);
	\draw (18.center) to (19.center);
	\draw (19.center) to (16.center);
	\draw (23.center) to (22.center);
	\draw (25.center) to (24.center);
	\draw (20.center) to (21.center);
	\draw (30.center) to (31.center);
	\draw (29.center) to (31.center);
	\draw (29.center) to (28.center);
	\draw (28.center) to (30.center);
	\draw (32.center) to (35.center);
	\draw (33.center) to (34.center);
	\draw (34.center) to (35.center);
	\draw (33.center) to (32.center);
	\draw (36.center) to (40.center);
	\draw (37.center) to (41.center);
	\draw (42.center) to (39.center);
	\draw (43.center) to (38.center);
      \end{pgfonlayer}
    \end{tikzpicture}
  \]
\end{prop}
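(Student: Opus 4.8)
The plan is to verify directly that the proposed data assemble into a symmetric monoidal category, reducing every coherence obligation to the structure already present in $\Ca$. By Mac Lane's strictification theorem I would first assume, without loss of generality, that $\Ca$ is a \emph{strict} symmetric monoidal category, with $\Cafree$ inheriting the strict unit and the (still nontrivial) symmetries; this removes the bookkeeping of associators and unitors while keeping the symmetry maps $\sigma$, which will genuinely be needed. Throughout I would use the standing fact that all identities and symmetries lie in $\mor(\Cafree)$, since $\Cafree$ is a wide symmetric monoidal subcategory, so that any morphism built by composing and tensoring the given $\xi_1,\xi_2$ with identities and symmetries is again free.

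The first real work is to exhibit identities and composition. For an object $f\colon A\to B$ I take the identity to be $(I,1_A,1_B)$, which satisfies \eqref{eq:parallelResourceEquation} with witness $j=1_I$. For composition, given $(Z,\xi_1,\xi_2)\colon f\to g$ and $(W,\psi_1,\psi_2)\colon g\to h$ with witnesses $j\colon E\to F$ and $k\colon G\to H$, I set the composite to be $(Z\otimes W,\chi_1,\chi_2)$ with $\chi_1 = (\xi_1\otimes 1_W)\circ\sigma\circ(\psi_1\otimes 1_E)$ and $\chi_2 = (\psi_2\otimes 1_F)\circ\sigma'\circ(\xi_2\otimes 1_W)$, where $\sigma,\sigma'$ are the symmetries interchanging the $W$-factor with the junk factors $E,F$. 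Substituting the first witnessing equation into the middle of this composite, using naturality of $\sigma$ to cancel the two symmetries, the interchange law collapses the result to $\bigl(\psi_2\circ(g\otimes 1_W)\circ\psi_1\bigr)\otimes j = (h\otimes k)\otimes j$; thus $k\otimes j$ is the required witness. This computation is the concrete content of ``combining in parallel while discarding junk''.

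Next I would check the category axioms. The unit laws are immediate, since for $W=I$ (respectively $Z=I$) the inserted symmetries collapse to identities under strictness. The delicate point, and the step I expect to be the main obstacle, is associativity: the two bracketings of a triple composite both yield the ancilla $Z\otimes W\otimes V$ and the witness $m\otimes k\otimes j$, but their free legs $\chi_1,\chi_2$ are a priori distinct composites of $\xi,\psi,\phi$ with various symmetries. To conclude that these legs are \emph{equal} morphisms of $\Cafree$ I would invoke the coherence theorem for symmetric monoidal categories: both legs are built from the same $\xi,\psi,\phi$ together with coherence isomorphisms realising the same permutation of the ancilla- and junk-wires, and coherence forces any two such morphisms with equal source, target, and underlying permutation to coincide. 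The same principle discharges the remaining equational obligations.

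Finally I would equip $\pcd(\Ca,\Cafree)$ with its monoidal structure. On objects the tensor is the tensor of morphisms in $\Ca$, with unit object $1_I$; on morphisms $(Z_1,\alpha_1,\alpha_2)$ and $(Z_2,\beta_1,\beta_2)$ it is $(Z_1\otimes Z_2,\zeta_1,\zeta_2)$, where $\zeta_1,\zeta_2$ interleave the given free legs by symmetries and the witness is the tensor of the two witnesses. The symmetry I would take to be $(I,\sigma,\sigma)\colon f\otimes g\to g\otimes f$ induced by the symmetry of $\Ca$, whose witness is $1_I$ by naturality of $\sigma$. Bifunctoriality of $\otimes$, the associativity and unit coherences, and the symmetry axioms then each reduce, exactly as in the associativity check above, to an instance of coherence in $\Ca$, so I expect no difficulty beyond the wire-routing bookkeeping already confronted there.
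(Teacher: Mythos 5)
The paper itself offers no proof of this proposition beyond a citation of \cite[Theorem 11]{CFS1}, so your direct verification is necessarily an independent route; its computational skeleton---identity $(I,1_A,1_B)$, composite ancilla $Z\otimes W$ with symmetry-swapped free legs and witness $k\otimes j$, tensor of witnesses for the monoidal product, symmetry $(I,\sigma,\sigma)$ justified by naturality---is the standard one and is essentially sound. Two points, however, need repair. First, your composition is not obviously well-defined. A morphism is the bare triple $(Z,\xi_1,\xi_2)$; the witness $j\colon E\to F$ is only asserted to \emph{exist} and is not part of the data. Your formula $\chi_1=(\xi_1\otimes 1_W)\circ\sigma\circ(\psi_1\otimes 1_E)$ depends on $E$ (through $1_E$ and $\sigma=1\otimes\sigma_{W,E}$), and $E$ is not recoverable from the triple: it is constrained only by $\mathrm{dom}(\xi_1)=C\otimes E$, and in a strict monoidal category an object may decompose this way in several ways, with distinct witnesses yielding genuinely different composite triples. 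So composition, as stated, is not a function of the two morphisms being composed; and even after fixing witnesses by a global choice, your associativity computation silently assumes the witness chosen for a composite is exactly $k\otimes j$ with junk domain $G\otimes E$, which an arbitrary choice function will not guarantee. The standard repairs are to carry the witness as part of the morphism data, or to quotient triples by the ``operational behaviour'' equivalence that the paper's own footnote alludes to (following \cite[Section 3]{CFS2}), checking your formula is independent of choices up to that equivalence. (Since the paper only ever uses the \emph{existence} of morphisms---the ordered monoid of Theorem \ref{thm:PreorderedMonoid}---the subtlety is harmless downstream, but a proof of the proposition as stated must address it.)

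Second, your appeal to coherence in the associativity step is misstated. Mac Lane's coherence theorem governs diagrams built from structural isomorphisms \emph{only}; it is false that two composites containing arbitrary free morphisms must coincide whenever they share source, target, and underlying permutation. For a counterexample, take distinct $\xi,\psi\colon X\to X$ in $\Cafree$: then $\xi\circ\psi\neq\psi\circ\xi$ in general, yet both have the same source, target, and (trivial) permutation. What is true, and what your two bracketings actually satisfy, is equality of the underlying string diagrams, and the honest justification is naturality of the braiding together with the hexagon: sliding $\psi_1$ and $\phi_1$ past the symmetries by naturality reduces the equality of the two candidate first legs to $(\sigma_{W,E}\otimes 1_V)\circ(1_W\otimes\sigma_{V,E})=\sigma_{W\otimes V,E}$, which is precisely the strict hexagon identity, and dually for the second legs; the same pattern discharges bifunctoriality and the symmetry axioms. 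With that substitution, and the well-definedness repair above, your argument goes through. One further sentence is also owed for the strictification step: to transfer the conclusion back from the strictified theory you should note that equivalent partitioned process theories yield equivalent categories $\pcd(\Ca,\Cafree)$, which is plausible but not automatic from Mac Lane's theorem alone.
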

\begin{proof}
A proof of this proposition may be found in \cite[Theorem
11]{CFS1}.\footnote{While it makes little difference for our purposes here, to
  do away with this assumption that the categories must be small, we may define
  a notion of equivalence on the triples $(Z,\xi_1,\xi_2)$ so that we consider
  two triples to be equivalent if they have, roughly speaking, the same
  `operational behaivour'. For similar constructions this approach is favoured
  in \cite[Section 3]{CFS2}.}
\end{proof}

\begin{rmk}
  Note that there are many ways to construct interesting resource theories from a
  partitioned process theory, depending on the methods we allow for turning one
  process into another using free processes. Other examples include the resource
  theory of parallel-combinable processes $\operatorname{PC}(\Ca,\Cafree)$
  \cite[Subsection 3.3]{CFS2} and the resource theory of universally-combinable
  processes $\operatorname{UC}(\Ca,\Cafree)$ \cite[Subsection 3.4]{CFS2}. While we
  shall not define these constructions here, for the reader familiar with them, we
  note that we have the inclusion of symmetric monoidal categories
  \[
    \operatorname{PC}(\Ca,\Cafree) \hooklongrightarrow \pcd(\Ca,\Cafree).
  \]
  Moreover, when the partitioned process theory obeys certain conditions, it can be
  shown that these definitions coincide, although in general they do not. We also
  note that it is possible to interpret these different constructions as different
  methods of constructing an operad of `wiring diagrams' \cite{Sp} from the set of
  free processes. 
\end{rmk}

In this paper we are interested in understanding the ordered monoid
corresponding to this resource theory. Recall that an \emph{ordered monoid}
$(X,\succeq,\cdot)$ is a set $X$ together with a partial order $\succeq$ and a monoid
multiplication $\cdot$ such that for all $x,y,z,w \in X$,
\begin{equation} \label{eq:orderedmonoid}
  \mbox{if}\quad x \succeq y \mbox{ and }
  z \succeq w, \quad\mbox{ then} \quad x\cdot z \succeq y \cdot w.
\end{equation}
We may partially decategorify a resource theory to obtain an ordered monoid in
the following way. 

\begin{thm}\label{thm:PreorderedMonoid}
Let $\mathbf R$ be a symmetric monoidal category, and call objects $f, g$ in
$\mathbf R$ equivalent if there exists morphisms $f \to g$ and $g \to f$. This
defines an equivalence relation. Write $[f]$ for the equivalence class of the
object $f$; we shall frequently abuse notation to simply write $f$ for the
equivalence class of $[f]$ and $\lvert\mathbf R\rvert$ for the set of
equivalence classes of objects in $\mathbf R$. 

Then there exists an ordered monoid $(\lvert\mathbf R\rvert,\succeq,\otimes)$ on
the set of these equivalence classes, with $[f] \succeq [g]$ if there exists a
morphism $f \to g$ in $\mathbf R$, and using the monoidal product in $\mathbf R$
to define $[f] \otimes [g] = [f \otimes g]$. Moreover, this monoid is commutative.
\end{thm}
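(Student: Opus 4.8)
The plan is to observe that the relation ``there exists a morphism $f \to g$ in $\mathbf R$'' is a preorder on the objects of $\mathbf R$, that the stated equivalence relation $\sim$ is precisely its symmetric part, and that $\succeq$ is then the standard partial order obtained by quotienting a preorder by its symmetric part; the monoidal structure of $\mathbf R$ will be shown to descend to the quotient, making it a commutative ordered monoid. First I would check that $\sim$ is an equivalence relation: reflexivity follows from the identity morphism $1_f \colon f \to f$, symmetry is immediate from the symmetric form of the definition, and transitivity follows by composition, since from $f \to g$, $g \to f$, $g \to h$, $h \to g$ the composites $f \to g \to h$ and $h \to g \to f$ witness $f \sim h$.

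Next I would establish that $\succeq$ is well defined and a partial order on $\lvert \mathbf R \rvert$. For well-definedness, if $f \sim f'$, $g \sim g'$, and a morphism $f \to g$ exists, then pre- and post-composing with the witnessing morphisms $f' \to f$ and $g \to g'$ produces a morphism $f' \to g'$, so whether $[f] \succeq [g]$ holds depends only on the classes. Reflexivity and transitivity of $\succeq$ again come from identities and composition. The conceptual crux is antisymmetry: if $[f] \succeq [g]$ and $[g] \succeq [f]$, then morphisms $f \to g$ and $g \to f$ both exist, so $f \sim g$ and hence $[f] = [g]$. This is exactly the step for which passing to equivalence classes is required, and although short it is what upgrades the preorder to a genuine order.

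Finally I would treat the monoidal and commutativity claims, all of which reduce to invoking the coherence data of the symmetric monoidal category $\mathbf R$. Functoriality of the bifunctor $\otimes$ shows that morphisms $f \to f'$ and $g \to g'$ combine into $f \otimes g \to f' \otimes g'$; applied to the witnessing morphisms of $f \sim f'$ and $g \sim g'$ this proves $\otimes$ is well defined on classes, and applied to the witnessing morphisms of $[x] \succeq [y]$ and $[z] \succeq [w]$ it yields $[x \otimes z] \succeq [y \otimes w]$, which is precisely condition (\ref{eq:orderedmonoid}). Associativity, the unit laws (with unit $[I]$ for the monoidal unit $I$), and commutativity then follow because the associator, the left and right unitors, and the braiding of $\mathbf R$ are all isomorphisms: each supplies morphisms in both directions and hence an equivalence of the relevant objects, so for instance the braiding $f \otimes g \to g \otimes f$ gives $[f] \otimes [g] = [g] \otimes [f]$. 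I expect no serious obstacle in this proof; its only real demand is the bookkeeping discipline of verifying that each relation and operation is independent of the choice of representatives before it is used, with the antisymmetry argument being the one genuinely load-bearing observation.
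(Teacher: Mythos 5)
Your proof is correct and follows essentially the same route as the paper's: pass to the quotient of the preorder by its symmetric part, and derive the monoid structure, compatibility condition, and commutativity from the functoriality of $\otimes$ and the associators, unitors, and braiding. You simply make explicit several steps the paper leaves implicit (the equivalence-relation check, well-definedness on classes, and antisymmetry), which is fine.
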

\begin{proof}
  The relation on the objects of $\mathbf R$ specified by the existence of a
  morphism $f \to g$ is reflexive due to identity morphisms, and the transitive
  due to composition. Thus we obtain a partial order on the set of equivalence
  classes of $\lvert\mathbf R\rvert$. 
  
  The unit for the monoid $\lvert\mathbf R\rvert$ is given by the monoidal unit
  in $\mathbf R$; the unitors show that it is indeed a unit for the monoid
  multiplication. Similarly, the associativity of the monoid multiplication
  follows from the existence of the associators for the monoidal product, and
  the commutativity from the braiding. Moreover, the compatibility condition
  (\ref{eq:orderedmonoid}) follows from the functoriality of the monoidal
  product. 
\end{proof}

Under this equivalence relation, we consider two resources the same if we may
convert one into the other, and vice versa. We then think of this ordered monoid
as a \emph{theory of resource convertibility}, with the monoid structure
describing how we can combine two resources to make another, and the partial
order describing when we can turn one resource (more precisely, equivalence class of
resources) into another.

Note that in the theory of resource convertibility for a resource theory of
parallel-combinable processes with discarding, the free morphisms themselves
form an equivalence class, and that this equivalence class acts as the identity
element for the monoid multiplication.  Note also that this equivalence class
contains the identity morphisms $1_X$ for all $X$, so there is no confusion to
be had by writing this equivalence class as $1$. 

By `discarding' in `resource theory of parallel-combinable processes with
discarding' we mean that no cost is incurred by replacing some resource $g
\otimes j$ with just some subpart $g$ of it. In terms of theories of resource
convertibility, this means that the corresponding monoid is non-negative. Recall
that we call an ordered monoid $(X,\ge,\cdot)$ \emph{non-negative} if the
identity element $1$ of the monoid is the bottom element for the partial order;
that is, if we have 
\[
  \mbox{ for all } x \in X,\enspace x \ge 1.
\]
This is also equivalent to the property that for all $x,y \in X$ we have $x
\cdot y \ge y$.  The ordered monoid corresponding to a resource theory of
parallel-combinable processes with discarding is always non-negative.

\begin{lem}
  Let $(\Ca,\Cafree)$ be a partitioned process theory. Then the ordered monoid
  $(\lvert \pcd(\Ca,\Cafree)\rvert, \succeq, \otimes)$ is non-negative.
\end{lem}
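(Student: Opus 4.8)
The plan is to use the characterisation of non-negativity recalled just above the statement: it suffices to show that the monoid identity $1$ is the bottom element of $\succeq$, i.e.\ that $[f] \succeq 1$ for every object $f$ of $\pcd(\Ca,\Cafree)$. Since $1$ is the equivalence class of the free morphisms, which in particular contains the identity $1_I$ on the monoidal unit $I$ of $\Ca$, and since by Theorem~\ref{thm:PreorderedMonoid} we have $[f] \succeq [g]$ exactly when there is a morphism $f \to g$ in $\pcd(\Ca,\Cafree)$, the whole statement reduces to exhibiting, for each $f \in \mor(\Ca)$, a single morphism from $f$ to a free process.

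First I would write $f \colon A \to B$ and build the \emph{total discarding} morphism $f \to 1_I$. The key observation is that discarding is already built into the definition through the existentially quantified process $j$ in (\ref{eq:parallelResourceEquation}): to turn $f$ into the trivial process it is enough to declare all of $f$ to be the discarded part. Concretely I would take the triple $(Z,\xi_1,\xi_2)$ with $Z = I$ and with $\xi_1 \colon I \otimes A \to A \otimes I$ and $\xi_2 \colon B \otimes I \to I \otimes B$ the evident structural isomorphisms, and I would witness (\ref{eq:parallelResourceEquation}) using $g = 1_I$ and $j = f$. The right-hand side $g \otimes j$ is then $1_I \otimes f$, while the left-hand side reduces to $1_I \otimes f$ as well once $\xi_1,\xi_2$ are cancelled using naturality of the unitors and the braiding.

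Two points then need checking. First, the structural maps $\xi_1,\xi_2$ must genuinely be free: this holds because $\Cafree$ is a \emph{symmetric monoidal} subcategory, hence closed under the associators, unitors and braidings of $\Ca$, all of which therefore lie in $\mor(\Cafree)$; likewise $1_I \in \mor(\Cafree)$ since $\Cafree$ is wide. Second, equation (\ref{eq:parallelResourceEquation}) must hold on the nose. In a strict symmetric monoidal category this is immediate, since there $Z = I$ gives $f \otimes 1_I = f = 1_I \otimes f$ and the $\xi_i$ are identities; in the general case it is a routine application of the coherence theorem, equivalently an evident identity in the string-diagram calculus where the unit wire is invisible.

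The argument is essentially bookkeeping, so I expect no deep obstacle. The only place demanding care is this last point: ensuring that the defining equation holds strictly rather than merely up to the coherence isomorphisms. I would handle this cleanly by working in the string-diagram calculus, or by first strictifying $\Ca$ and noting that the induced ordered monoid $(\lvert \pcd(\Ca,\Cafree)\rvert,\succeq,\otimes)$ is unchanged under such an equivalence, after which the equation becomes the literal identity $f = f$.
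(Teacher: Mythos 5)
Your proposal is correct and is essentially the paper's own argument: both exhibit a morphism in $\pcd(\Ca,\Cafree)$ from $f$ to a free process by taking $j = f$ (discarding all of $f$) and using the braiding/structural isomorphisms, which lie in $\mor(\Cafree)$, as the witnesses $\xi_1,\xi_2$, so that equation (\ref{eq:parallelResourceEquation}) holds strictly by naturality. The only cosmetic difference is that the paper shows $f \succeq \xi$ for an \emph{arbitrary} free $\xi$ (taking $Z$ to be the codomain of $\xi$), whereas you target the single free morphism $1_I$ with $Z = I$ --- a special case that suffices for exactly the reason you give.
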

\begin{proof}
  Given $\xi: A \to B \in \mor(\Cafree)$ and $f: X \to Y \in \mor(\Ca)$, we see
  that $\sigma^{-1}_{Y,B} \circ (f \otimes
  1_B) \circ (1_X \otimes \xi) \circ \sigma_{A,X} = \xi \otimes f$, where
  $\sigma_{A,X}: A \otimes X \to X \otimes A$ is the braiding. Thus $[f] \succeq
  [\xi] = [1]$.  
\end{proof}

\begin{rmk}
Although \cite{CFS1} demonstrates that the resource theory of
parallel-combinable processes with discarding is a highly applicable structure,
there are some instances in which the discarding means it does not yield an
interesting ordered monoid. 

For example, consider the partitioned process theory $(\Rel, \Set)$, where
$\Rel$ is the symmetric monoidal category with objects finite sets, morphisms
relations, and monoidal product cartesian product, and $\Set$ is the wide
symmetric monoidal subcategory with morphisms restricted to the functions. We
might give the partitioned process theory $(\Rel,\Set)$ the following
interpretation. Each relation $f \subseteq X \times Y$ may be viewed as a noisy
possibilistic communication channel, with $x \in X$ possibly mapping to any of
the $y \in Y$ such that $(x,y) \in f$. The free morphisms, in this case
functions, represent encoding and decoding functions, while the cartesian
product models the fact that any channel can be used arbitrarily many times. The
resource theory $\pcd(\Rel,\Set)$ might thus be an attempt at a model for
simulatability of one possibilistic channel by another.

The ordered monoid corresponding to this resource theory, however, is trivial.
Write the empty set $\varnothing$ and let $1$ be a singleton set. Given
relations $f: X \to Y$ and $g: A \to B$, we may choose $Z = \varnothing$,
$\xi_1: \varnothing \to X$, $\xi_2: Y \to B$, and $j: \varnothing \to 1$. Then
both $\xi_2 \circ (f \times 1_Z) \circ \xi_1$ and $g \times j$ are the unique
relation $\varnothing \to B$. This implies that for all $f,g \in \mor(\Rel)$, we
have a morphism $f \to g \in \pcd(\Rel,\Set)$, thus yielding a trivial monoid.
\end{rmk}

\section{Complete families of monotones}

In information theory, one is often interested in the trying to define an
entropy of a source. These entropies give a real number quantifying, loosely
speaking, the randomness of the source. Such functions provide insight into
whether one source might be simulated by another \cite{CT}. In terms of
partitioned process theories $(\Ca,\Cafree)$, this suggests we might look at
order-preserving functions from the ordered monoid derived from
$\pcd(\Ca,\Cafree)$ to the real numbers. We call such functions monotones.

\begin{defn}
  Let $(X,\succeq)$ be a partially ordered set. A \emph{monotone} is an
  order-preserving function $M: (X, \succeq) \to (\re, \ge)$. It is further
  called \emph{complete} if it is also order-reflecting: that is, if for all $x,
  y \in X$ we have 
  \[
    x \succeq y \quad \mbox{ if and only if } \quad 
    \enspace M(x) \ge M(y).
  \]
\end{defn}

A complete monotone exists for an partially ordered set if and only if it embeds
into the reals. This is rarely the case. It is always possible, however, to find
a \emph{complete family} of monotones \cite[Proposition 5.2]{CFS2}.

\begin{defn}
  Given a partially ordered set $(X,\succeq)$, we call a collection $\{M_i\}_{i
  \in I}$ of monotones on $(X, \succeq)$  a \emph{complete family of monotones}
  if for all $x,y \in X$ we have
  \[
    x \succeq y \quad \mbox{ if and only if } \quad 
    \enspace M_i(x) \ge M_i(y) \mbox{ for all } i \in I.
  \]
\end{defn}

Some families, however, are better than others. To provide additional insight
into the structure of the ordered monoid, we frequently require some extra
properties, such as preservation of some sort of monoid structure.

\begin{defn}
  We say that a monotone on an ordered set $(X,\succeq,\cdot)$ is
  \begin{itemize}
    \item \emph{additive} if it is also a monoid homomorphism into
      $(\re,\ge,+)$.
    \item \emph{non-negative} if its image in $\re$ forms a non-negative ordered
      monoid.
  \end{itemize}
\end{defn}

Additive monotones are also known as \emph{functionals} \cite[Section 3]{Fr}.
One advantage of working with additive monotones is that we can use them to
recover the structure of the commutative monoid. Indeed, a complete family of
monotones $\{M_i\}_{i \in I}$ on $(X,\succeq,\cdot)$ induces an injective
order-preserving monoid homomorphism into $(\re^I,\ge,+)$, where the partial
order $\ge$ and monoid multiplication $+$ are given by the pointwise
construction, and so locates $(X,\succeq,\cdot)$ as a sub-ordered monoid of
$\re^I$. 

We shall work towards providing some examples of complete families of additive
monotones for resource theories of parallel-combinable processes with
discarding. Our next step is to characterise the non-negative additive monotones
for such resource theories. 

\begin{thm} \label{thm:main}
  Let $(\Ca,\Cafree)$ be a partitioned process theory and let $(X,\ge,\cdot)$ be
  a non-negative ordered monoid. A function $\mu:\mor(\Ca)
  \to X$ induces an order-preserving monoid homomorphism 
  \begin{align*}
    M:(\lvert \pcd(\Ca,\Cafree)\rvert, \succeq, \otimes) &\longrightarrow
    (X,\ge,\cdot) \\
    [f] &\longmapsto \mu(f) 
  \end{align*}
  if and only if for all
  $Z \in \lvert \Ca \rvert$, $f,g \in \mor(\Ca)$, and $\xi \in \mor(\Cafree)$ we
  have
  \begin{enumerate}[(i)]
    \item $\mu(f \otimes g) = \mu(f)\cdot\mu(g)$;
    \item $\mu(1_Z)=1$; and
    \item $\mu(f) \ge \mu(\xi \circ f)$ and $\mu(f) \ge \mu(f \circ
      \xi)$ whenever such composites are well-defined.
  \end{enumerate}
  Moreover, this gives a one-to-one correspondence: every order-preserving
  monoid homomorphism on $(\lvert\pcd(\Ca,\Cafree)\rvert,\succeq,\otimes)$
  arises from a unique such function $\mu$.
\end{thm}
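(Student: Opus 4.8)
The plan is to prove the biconditional in two directions and then dispatch the one-to-one correspondence as bookkeeping, with the non-negativity of $X$ doing the essential work in the main inequality.

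For the direction assuming (i)--(iii), I would first check that $M$ preserves the monoidal unit and product, and then establish the heart of the argument: that $M$ is order-preserving, i.e.\ that a $\pcd$-morphism $f \to g$ forces $\mu(f) \ge \mu(g)$. Given witnessing data $(Z,\xi_1,\xi_2,j)$ with $\xi_2 \circ (f \otimes 1_Z) \circ \xi_1 = g \otimes j$, I would chain the three hypotheses. Applying (iii) twice strips off the free morphisms $\xi_2$ and $\xi_1$, giving $\mu(g \otimes j) = \mu(\xi_2 \circ (f \otimes 1_Z)\circ \xi_1) \le \mu(f \otimes 1_Z)$; then (i) splits the tensor and (ii) kills $\mu(1_Z)=1$, so that $\mu(g)\cdot\mu(j) = \mu(g \otimes j) \le \mu(f)$. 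Now non-negativity of $X$ gives $\mu(j) \ge 1$, whence the compatibility axiom (\ref{eq:orderedmonoid}) yields $\mu(g)\cdot\mu(j) \ge \mu(g)\cdot 1 = \mu(g)$, and therefore $\mu(f) \ge \mu(g)$. Well-definedness of $M$ on equivalence classes then follows from antisymmetry of $\ge$: if $[f]=[g]$ there are $\pcd$-morphisms both ways, so $\mu(f)=\mu(g)$.

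For the converse direction, assume $M$ is an order-preserving monoid homomorphism with $M([f])=\mu(f)$. Conditions (i) and (ii) are immediate from $M$ being a monoid homomorphism, using that each identity $1_Z$ is free and hence lies in the unit class $1$, so $\mu(1_Z)=M(1)=1$. For (iii) I would exhibit explicit $\pcd$-morphisms witnessing $[f]\succeq[\xi\circ f]$ and $[f]\succeq[f\circ\xi]$ and then invoke order-preservation. Taking $Z$ to be the monoidal unit $I$ and $j=1_I$ avoids all unitor bookkeeping: for $\xi\circ f$ one checks that the triple $(I,\,1,\,\xi\otimes 1_I)$ satisfies $(\xi\otimes 1_I)\circ(f\otimes 1_I) = (\xi\circ f)\otimes 1_I$, and symmetrically $(I,\,\xi\otimes 1_I,\,1)$ handles $f\circ\xi$; here $\xi\otimes 1_I$ is free since $\Cafree$ is a wide symmetric monoidal subcategory.

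Finally, the correspondence is formal: the assignments $\mu\mapsto M_\mu$ (with $M_\mu([f])=\mu(f)$) and $M\mapsto\mu_M$ (with $\mu_M(f)=M([f])$) are well-defined by the two directions just proved and are mutually inverse, and since every object of $\pcd(\Ca,\Cafree)$ has the form $f$, the defining equation $\mu(f)=M([f])$ forces $\mu$ to be unique. The main obstacle is the order-preservation step in the first direction: it is the only place where all three conditions, together with non-negativity, are genuinely needed, and the delicate point is keeping the inequalities running in the correct direction through the chain---in particular discarding the spurious factor $\mu(j)$ using non-negativity rather than any commutativity of $X$.
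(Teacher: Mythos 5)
Your proof is correct and takes essentially the same route as the paper's: the same chain $\mu(f) = \mu(f \otimes 1_Z) \ge \mu(\xi_2 \circ (f \otimes 1_Z) \circ \xi_1) = \mu(g \otimes j) \ge \mu(g)$ in one direction, reading \textit{(i)--(iii)} off the homomorphism and monotonicity properties of $M$ in the other, and uniqueness forced by $\mu(f) = M([f])$. If anything you are more careful than the paper at two points: you exhibit explicit $\pcd$-witnesses for $[f] \succeq [\xi \circ f]$ and $[f] \succeq [f \circ \xi]$ (which the paper merely asserts), and you correctly attribute the final inequality $\mu(g \otimes j) \ge \mu(g)$ to non-negativity of $X$ together with the compatibility axiom (\ref{eq:orderedmonoid}), a step the paper's proof labels only as ``\textit{(i)}''.
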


Conditions \textit{(i)} and \textit{(ii)} are unsurprising; they simply ask that
the function respect the monoid structure. Condition \textit{(iii)}, however, is a
bit more illuminating: it tells us that as long as composition with free
morphisms reduces the value of the function $\mu$, the function induces a monotone.

\begin{proof}
  Suppose first that $\mu$ induces an ordered monoid homomorphism $M$. Note in
  particular this means that $M$ is well-defined on the equivalence classes of
  objects in $\pcd(\Ca,\Cafree)$, and also that $M$ preserves the order and
  monoid multiplication. Given that $M$ preserves the monoid multiplication, we
  have
  \[
    \mu(f\otimes g) = M([f\otimes g]) = M([f])\cdot M([g]) = \mu(f)\cdot\mu(g),
  \]
  and as $M$ preserves identities, we have $\mu(1_Z) = M([1_Z]) = 0$. Recall now
  that $f \succeq g$ if and only if there exists free processes $\xi_1, \xi_2$,
  an object $Z$, and a morphism $j$ in $\Ca$ such that $\xi_1 \circ (f \otimes
  1_Z) \circ \xi_2 = g \otimes j$. Then, whenever they are defined, we thus have
  $f \succeq f \circ \xi$ and $f \succeq \xi \circ f$, and so since $M$ is
  monotone we have
  \[
    \mu(f) = M([f]) \ge M([f \circ \xi]) = \mu(f \circ \xi)
  \]
  and similarly $\mu(f) \ge \mu(\xi \circ f)$. Thus $\mu$ obeys
  \textit{(i)--(iii)}.

  Conversely, suppose that $\mu$ has the properties \textit{(i)--(iii)}.  
  Now suppose that we have processes $f$ and $g$ such that $\xi_1 \circ (f
  \otimes 1_Z) \circ \xi_2 = g \otimes j$. We thus have
  \[
    \mu(f) \stackrel{\textit{(i,ii)}}{=} \mu(f \otimes 1_Z)
    \stackrel{\textit{(iii)}}{\ge} \mu(\xi_1 \circ (f \otimes 1_Z) \circ \xi_2)
    = \mu(g \otimes j) \stackrel{\textit{(i)}}{\ge} \mu(g).
  \]
  This implies that if there exists a morphism $f \to g$ in $\pcd(\Ca,\Cafree)$
  then $\mu(f) \ge \mu(g)$, and hence that $M([f]) = \mu(f)$ is well-defined
  and monotone. Properties \textit{(i)} and \textit{(ii)} then immediately
  imply $M$ is a monoid homomorphism.

  Finally, given a homomorphism of ordered monoids $M: (\lvert
  \pcd(\Ca,\Cafree)\rvert,\succeq,\otimes) \to (X,\ge,\cdot)$, we may define
  $\mu(f) = M([f])$ to obtain the unique function $\mu: \mor(\Ca) \to X$ that
  induces it.
\end{proof}

This theorem allows us to construct non-negative additive monotones by working
from the partitioned process theory $(\Ca,\Cafree)$. We take advantage of this
fact in the next section to produce some families of complete monotones.

\section{Example: encoding functions as resources} \label{sec:ex}

In this section we construct two examples of complete families of non-negative
additive monotones for resource theories of parallel-composable processes with
discarding. Both partitioned process theories at hand have as processes the
functions between finite sets. They may hence be understood as modelling
encoding schemes, where a resource is a method for assigning a code symbol to
each element of some finite input set. Our ordered monoids thus answer the
question of when we may use the free morphisms and the construction
(\ref{eq:parallelResourceEquation}) to turn one encoding scheme into another.

We emphasise here the concreteness of these results: given two functions between
finite sets, one can quickly use the complete families of additive monotones we
construct to evaluate whether one resource is convertible into the other.

\subsection{The partitioned process theory of functions and bijections}\label{sec:SetuPermu}

Let $\SetU$ the symmetric monoidal category with objects finite sets, morphisms
functions, and monoidal product disjoint union, and let $\BijU$ be the wide
symmetric monoidal subcategory with morphisms restricted to the bijective
functions. Write $\# X$ for the cardinality of a set $X$.

\begin{prop}\label{def:measureBijuSetU} 
  For $i \in \N$, define functions:
  \begin{align*}
    \varphi_{i}: \mor(\SetU) &\longrightarrow \N;  \\
    (f: X \to Y) &\longmapsto \#\{y \in Y \mid \# f^{-1}(y) =i\}.
  \end{align*}
  The family of monotones $\{F_i\}_{i \in \N\setminus\{1\}}$ induced by the
  family of functions $\{\varphi_i\}_{i \in \N\setminus\{1\}}$
  is a complete family of additive monotones for the resource theory
  $\pcd(\SetU,\BijU)$.
\end{prop}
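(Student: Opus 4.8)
The plan is to apply Theorem~\ref{thm:main} with the non-negative ordered monoid $(X,\ge,\cdot)=(\N,\ge,+)$, whose unit is $0$, to establish that each $F_i$ is a well-defined non-negative additive monotone, and then to prove completeness by a direct combinatorial construction of a $\pcd$-morphism. Throughout, the governing observation is that for a function $f\colon X\to Y$ the quantity $\varphi_i(f)$ records the number of fibres of $f$ of cardinality $i$, so that the family $\{\varphi_i(f)\}_{i\in\N}$ is precisely the multiset of fibre sizes of $f$ (with $\varphi_0(f)$ counting the points of $Y$ outside the image). Two functions are related by pre- and post-composition with bijections if and only if they share the same fibre-size multiset, since such a multiset equality furnishes compatible bijections fibre by fibre on the codomains and domains.

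First I would verify conditions \textit{(i)}--\textit{(iii)} of Theorem~\ref{thm:main} for $\mu=\varphi_i$. Additivity \textit{(i)} holds because the fibres of $f\otimes g$ over the disjoint union are exactly the fibres of $f$ together with those of $g$, so fibre-size counts add. For \textit{(ii)}, the identity $1_Z$ has every fibre a singleton, whence $\varphi_i(1_Z)=0$ for $i\neq 1$ but $\varphi_1(1_Z)=\#Z$; this is exactly why the index $i=1$ must be omitted from the family. For \textit{(iii)}, pre- or post-composing with a bijection $\xi$ merely relabels the fibres of $f$ and so leaves every $\varphi_i$ unchanged, giving the equalities $\varphi_i(f)=\varphi_i(\xi\circ f)=\varphi_i(f\circ\xi)$, in particular the required inequalities. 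Hence each $F_i$ with $i\neq 1$ is a non-negative additive monotone, and the forward (order-preserving) direction of completeness is immediate.

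The substance lies in the reverse direction: assuming $\varphi_i(f)\ge\varphi_i(g)$ for all $i\in\N\setminus\{1\}$, I must produce $Z\in\lvert\SetU\rvert$, bijections $\xi_1,\xi_2$, and a function $j$ with $\xi_2\circ(f\otimes 1_Z)\circ\xi_1 = g\otimes j$. By the observation above, this equation is solvable in $\xi_1,\xi_2$ precisely when $f\otimes 1_Z$ and $g\otimes j$ have the same fibre-size multiset, that is, when $\varphi_i(f)+\#Z\cdot[i{=}1]=\varphi_i(g)+\varphi_i(j)$ for every $i$. I would construct $j$ by prescribing its fibre-size multiset: for $i\neq 1$ set $\varphi_i(j)=\varphi_i(f)-\varphi_i(g)$, which is non-negative by hypothesis (and any finite multiset of non-negative integers is realised by some function, using empty fibres where needed). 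It then remains to balance the $i=1$ count, where tensoring with $1_Z$ contributes $\#Z$ extra singleton fibres: writing $d=\varphi_1(g)-\varphi_1(f)$, choose $(\#Z,\varphi_1(j))=(d,0)$ if $d\ge 0$ and $(0,-d)$ otherwise. This determines $j$ and $Z$ and yields the desired $\pcd$-morphism $f\to g$, so $[f]\succeq[g]$.

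The main obstacle is this reverse direction, and within it two points require care. The first is recognising that the defining equation~\eqref{eq:parallelResourceEquation} collapses, under composition with bijections, to an equality of fibre-size multisets---this is what lets ``discarding'' (the free choice of $j$) and parallel combination with $1_Z$ do the work. The second is the bookkeeping at $i=1$: because $1_Z$ injects extra singleton fibres, the $i=1$ datum cannot be controlled by a monotone and is instead absorbed by the free parameters $\#Z$ and $\varphi_1(j)$, which is precisely why excluding $i=1$ from the family still leaves it complete. Finally, I would note that all counts involved are finite, so each $j$ and $Z$ constructed is a legitimate object of $\SetU$.
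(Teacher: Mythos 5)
Your proposal is correct and follows essentially the same route as the paper: conditions \textit{(i)}--\textit{(iii)} of Theorem~\ref{thm:main} are checked just as in the paper (with composition by bijections in fact preserving each $\varphi_i$), and your completeness argument---prescribing the fibre-size multiset of $j$ via $\varphi_i(j)=\varphi_i(f)-\varphi_i(g)$ for $i\neq 1$ and absorbing the $i=1$ discrepancy into $\#Z$ or $\varphi_1(j)$ according to the sign of $\varphi_1(f)-\varphi_1(g)$---is exactly the paper's two-case construction, concluded by the same observation that equal fibre-size multisets for $f\sqcup 1_Z$ and $g\sqcup j$ yield the required bijections $\xi_1,\xi_2$. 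Your explicit remark that the $\pcd$-relation under bijective $\xi_1,\xi_2$ is \emph{equivalent} to equality of fibre-size multisets is a slightly sharper framing than the paper states, but the substance is identical.
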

  Observe that each $\varphi_i$ takes a function $f: X \to Y$ and returns the
  number of elements of $Y$ that have $i$ elements of $X$ map to it. Also note
  that for every list of $\ell: \N \to \N$ of natural numbers with only finitely
  many entries nonzero, there exists a function $j_\ell:C \to D \in \mor(\SetU)$
  such that $\varphi_i(j_\ell) = \ell(i)$---indeed, simply choose $C_i$ of
  cardinality $i \cdot \ell(i)$, $D_i$ of cardinality
  $\ell(i)$, let $j_{\ell,i}$ map $i$ elements of $C_i$ to each element of $D$,
  and then define $j_\ell = \bigsqcup_{i = 0}^\infty j_{\ell,i}$. 
\begin{proof}[Proof of Proposition \ref{def:measureBijuSetU}]
  That we have defined a family of additive monotones is an immediate
  consequence of Theorem \ref{thm:main}. Remembering that $i \ne 1$ and that the
  free morphisms in this case are the bijections, it is clear that each
  $\varphi_i$ has the properties \textit{(i)--(iii)} required.
 
  We turn our attention to completeness. Fix functions $f: X \to Y$ and $g: A
  \to B$. For completeness we need to show that if for all $i \in \N \setminus\{1\}$ we
  have $F_i(f) \geq F_i(g)$, then there exists a finite set $Z$, bijections
  $\xi_1,\xi_2$, and a function $j$ such that
  \[
    \xi_2 \circ (f \sqcup 1_Z) \circ \xi_1 = g \sqcup j.
  \]
  Let us construct such data as follows. Consider the list of
  numbers $\varphi_i(f)-\varphi_i(g)$ for all $i \in \N$ (\emph{including}
  $i=1$). Then we have two cases:
  \begin{enumerate}
    \item if $\varphi_1(f)-\varphi_1(g) \ge 0$, choose $j$ such that
      $\varphi_i(j) = \varphi_i(f)-\varphi_i(g)$ for all $i \in \N$, and choose
      $Z$ to be the empty set $\varnothing$. 
    \item if $\varphi_1(f)-\varphi_1(g) < 0$, choose $j$ such that $\varphi_1(j) =
      0$ and $\varphi_i(j) = \varphi_i(f)-\varphi_i(g)$ for all $i \in \N \setminus
      \{1\}$, and $Z$ to be a set of cardinality $\varphi_1(g) -\varphi_1(f)$. 
  \end{enumerate}
  We now have obtained $j \in \mor(\SetU)$ and $Z \in \lvert \SetU\rvert$ such
  that
  \[ 
    \varphi_i(f \sqcup 1_Z) = \varphi_i(g \sqcup j)
  \]
  for all $i \in \N$. Write $C$ and $D$ respectively for the domain and codomain
  of $j$. As $f \sqcup 1_Z$ and $g \sqcup j$ both have the same list of
  cardinalities of preimages of elements of their codomains, we may now choose
  bijections $\xi_1: A \sqcup C \to X \sqcup Z$ and $\xi_2: Y \sqcup Z \to B
  \sqcup D$ such that $ \xi_2 \circ (f \sqcup 1_Z) \circ \xi_1 = g \sqcup j$,
  as required.
\end{proof}

Write $\mathrm{FinSupp}_\N(X)$ for the set of finitely supported functions on
$X$; that is, the set of functions $f: X \to \N$ for which $f(x) \ne 0$ for at
most finitely many $x \in X$. Such functions are partially ordered by setting
$f \ge g$ if for all $x \in X$ we have $f(x) \ge g(x)$, and may be given a
commutative monoid structure by setting $(f+g)(x) = f(x)+g(x)$. 

The above discussion thus characterises the theory of resource convertibility
for $\pcd(\SetU,\BijU)$:

\begin{cor}\label{corollary1:PreorderOfBijUSetU}
We have an isomorphism of ordered monoids 
\begin{align*} 
  F: \big(\lvert\pcd(\SetU,\BijU)\rvert, \succeq, \sqcup\big) &\longrightarrow
  \big( \mathrm{FinSupp}_\N(\N\setminus\{1\}), \ge, + \big); \\
  f &\longmapsto \big(i \mapsto F_i(f)\big).
\end{align*}
\end{cor}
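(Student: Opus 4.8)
The plan is to deduce the corollary almost entirely from the completeness result of Proposition~\ref{def:measureBijuSetU}, treating the map $F$ as the packaging of the family $\{F_i\}_{i\in\N\setminus\{1\}}$ into a single function valued in $\N^{\N\setminus\{1\}}$, and then verifying the four properties needed for an isomorphism of ordered monoids: that $F$ is a well-defined monoid homomorphism, that it is order-preserving, that it is order-reflecting (hence injective), and that it is surjective onto $\mathrm{FinSupp}_\N(\N\setminus\{1\})$.

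First I would observe that the monoid-homomorphism and order-preservation properties come for free. Since each $F_i$ is an additive monotone on $\lvert\pcd(\SetU,\BijU)\rvert$, it is well-defined on equivalence classes, satisfies $F_i(f\sqcup g)=F_i(f)+F_i(g)$, sends the unit to $0$, and is order-preserving; assembling these coordinatewise shows $F$ is a well-defined homomorphism of ordered monoids into $(\N^{\N\setminus\{1\}},\ge,+)$. To see that the image lies in $\mathrm{FinSupp}_\N(\N\setminus\{1\})$, note that any $f\colon X\to Y$ has finite codomain, so $\varphi_i(f)=0$ for all but finitely many $i$, whence $F(f)$ has finite support.

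Next I would use completeness to obtain order-reflection and injectivity. Completeness says precisely that $[f]\succeq[g]$ if and only if $F_i(f)\ge F_i(g)$ for all $i\neq 1$, i.e.\ if and only if $F(f)\ge F(g)$; thus $F$ is an order-embedding. Injectivity then follows from antisymmetry: if $F(f)=F(g)$ then $F(f)\ge F(g)$ and $F(g)\ge F(f)$, so $[f]\succeq[g]$ and $[g]\succeq[f]$, giving $[f]=[g]$ by the construction of the quotient order in Theorem~\ref{thm:PreorderedMonoid}.

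The only genuinely new input is surjectivity, and for this I would invoke the explicit construction recorded in the discussion following Proposition~\ref{def:measureBijuSetU}. Given any $\ell\in\mathrm{FinSupp}_\N(\N\setminus\{1\})$, extend it to a finitely supported list on all of $\N$ by setting its value at $1$ to $0$, and let $j_\ell\colon C\to D$ be the function with $\varphi_i(j_\ell)=\ell(i)$ produced there; then $F(j_\ell)=\ell$. Combining all four properties, $F$ is a bijective order-embedding and monoid homomorphism, hence an isomorphism of ordered monoids. I do not expect any serious obstacle here: the substantive work---order-reflection---has already been carried out in proving completeness, and the remaining steps are the routine verification of finite support and the recycling of the realization construction for surjectivity.
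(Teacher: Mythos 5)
Your proposal is correct and follows exactly the route the paper intends: the corollary is stated as an immediate consequence of Proposition~\ref{def:measureBijuSetU} (completeness giving order-reflection, additivity giving the homomorphism property) together with the realization remark preceding its proof (the construction of $j_\ell$ with $\varphi_i(j_\ell)=\ell(i)$, giving surjectivity). Your added verifications---finite support via finiteness of the codomain, and injectivity from order-embedding plus antisymmetry of the quotient order---are precisely the routine details the paper leaves implicit.
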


\subsection{The partitioned process theory of functions and injections}
Write $\InjU$ for the wide symmetric monoidal subcategory of $\SetU$ with
morphisms injective functions. We consider the partitioned process theory
$(\SetU,\InjU)$.

\begin{prop}\label{def:measureInjuSetU} 
  For $i \in \N$, define functions:
  \begin{align*}
    \gamma_{i}: \mor(\SetU) &\longrightarrow \N;  \\
    (f: X \to Y) &\longmapsto \#\big\{y \in Y \,\big\mid\, \# f^{-1}(y) \ge i\big\}.
  \end{align*}
  The family of monotones $\{G_i\}_{i \in \N\setminus\{0,1\}}$ induced by the
  family of functions $\{\gamma_i\}_{i \in \N\setminus\{0,1\}}$
  is a complete family of additive monotones for the resource theory
  $\pcd(\SetU,\InjU)$.
\end{prop}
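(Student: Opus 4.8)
The plan is to mirror the structure of the proof of Proposition~\ref{def:measureBijuSetU}, adapting it to the subtlety that the free morphisms are now injections rather than bijections. First I would verify that the family $\{G_i\}_{i \in \N\setminus\{0,1\}}$ consists of additive monotones by checking conditions \textit{(i)--(iii)} of Theorem~\ref{thm:main} for each $\gamma_i$. Additivity, $\gamma_i(f \sqcup g) = \gamma_i(f) + \gamma_i(g)$, is clear since preimages under a disjoint union split cleanly, and $\gamma_i(1_Z) = 0$ whenever $i \ge 2$ because an identity has all preimages of cardinality exactly $1$. The key difference from the bijection case is condition \textit{(iii)}: I must check that pre- or post-composition with an injection $\xi$ can only \emph{decrease} $\gamma_i$. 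Post-composing $f$ with an injection enlarges the codomain and relabels, leaving all preimage cardinalities unchanged; pre-composing $f$ with an injection can only delete elements from each preimage, hence can only decrease the count $\#\{y \mid \#f^{-1}(y) \ge i\}$. So $\gamma_i(f) \ge \gamma_i(\xi \circ f)$ and $\gamma_i(f) \ge \gamma_i(f \circ \xi)$, giving \textit{(iii)}.

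\emph{I expect the main work to be completeness.} Fix $f: X \to Y$ and $g: A \to B$ with $G_i(f) \ge G_i(g)$, i.e. $\gamma_i(f) \ge \gamma_i(g)$, for all $i \ge 2$; I must produce a finite set $Z$, injections $\xi_1, \xi_2$, and a function $j$ with $\xi_2 \circ (f \sqcup 1_Z) \circ \xi_1 = g \sqcup j$. The natural move is to pass from the cumulative counts $\gamma_i$ back to the fibre counts $\varphi_i$ via $\varphi_i(f) = \gamma_i(f) - \gamma_{i+1}(f)$, and then try to replicate the bijection-case construction of building a suitable $j$ and $Z$ so that $f \sqcup 1_Z$ and $g \sqcup j$ have matching multisets of preimage sizes. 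The crucial point is that, because $\xi_1, \xi_2$ are now merely injective, the constraint is weaker than in the bijection case: we need only be able to \emph{embed} $g \sqcup j$ into $f \sqcup 1_Z$ after suitable relabelling, rather than match cardinalities on the nose.

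The hard part will be translating the inequalities $\gamma_i(f) \ge \gamma_i(g)$ into the existence of the required injections. Concretely, I would argue that the hypotheses let us, fibre by fibre, inject the codomain of $g$ into that of $f$ while pairing up preimages so that each fibre of $g$ sits inside a fibre of $f$ of at least the same size (the condition $\gamma_i(f) \ge \gamma_i(g)$ for all $i \ge 2$ says exactly that for every threshold, $f$ has at least as many large fibres as $g$, which is the combinatorial statement that such a matching exists). Adding the identity padding $1_Z$ on a set $Z$ lets us account for the excess elements of $X$ that are not consumed by this matching, and choosing $j$ to absorb the leftover fibres of $f$ not used by $g$ completes the picture. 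Writing $\xi_1$ as the induced injection $A \sqcup C \hookrightarrow X \sqcup Z$ and $\xi_2$ as the injection $Y \sqcup Z \hookrightarrow B \sqcup D$ realising this matching should then yield the required equation. The delicate bookkeeping is ensuring the fibre-matching injection exists simultaneously for all thresholds, which is where the cumulative (rather than exact) nature of $\gamma_i$ is precisely what makes the injective free morphisms sufficient.
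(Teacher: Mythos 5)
Your first half is fine and matches the paper: additivity of $\gamma_i$ under $\sqcup$, $\gamma_i(1_Z)=0$ for $i\ge 2$, and condition \textit{(iii)} via the observations that post-composing with an injection leaves all fibre sizes unchanged while pre-composing can only shrink fibres. The genuine gap is in the completeness half, at your parenthetical claim that ``the condition $\gamma_i(f) \ge \gamma_i(g)$ for all $i \ge 2$ says exactly that for every threshold, $f$ has at least as many large fibres as $g$.'' The hypothesis gives dominance only at thresholds $i \ge 2$, and at threshold $1$ it can fail badly: take $f\colon \varnothing \to Y$ and $g = 1_B$ with $\#Y = \#B$. Then $\gamma_i(f) = \gamma_i(g) = 0$ for all $i \ge 2$, yet $g$ has $\#B$ fibres of size $1$ while $f$ has no nonempty fibre at all, so no matching placing each fibre of $g$ inside a fibre of $f$ exists. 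Your stated role for $Z$ --- accounting for ``excess elements of $X$ not consumed by the matching'' --- would not repair this, and is in fact a non-issue: $\xi_1$ is only required to be injective, so elements of $X \sqcup Z$ outside its image are simply unconstrained and need no accounting.

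The actual job of $Z$ is on the other side of the matching: each $z \in Z$ contributes a fibre of size exactly $1$ to $f \sqcup 1_Z$, and these singleton fibres are what host the size-$1$ fibres of $g$, which the hypothesis says nothing about. A correct completion runs: match the fibres of $g$ of size $\ge 2$ into fibres of $f$ by a sorted-greedy (Hall-type) argument, which needs dominance only at thresholds $\ge 2$; take $\#Z \ge \varphi_1(g)$ so the singleton fibres of $g$ can be sent to distinct $Z$-elements; take $C = \varnothing$, $j\colon \varnothing \to D$ with $D$ padded so that the partial matching extends to a total injection $\xi_2\colon Y \sqcup Z \to B \sqcup D$ (here injectivity, rather than bijectivity, of the free morphisms is what makes the padding painless); then define $\xi_1$ fibrewise as you describe. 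Note, for what it is worth, that the paper's printed algorithm chooses $\#Z = \max\{0,\#B - \#Y\}$ and asserts dominance ``including $0$ and $1$,'' and the example above defeats that choice of $Z$ as well; the proposition itself is true via the enlarged $Z$ just described. So your instinct that the simultaneous-thresholds bookkeeping is the delicate point was exactly right, but as written the central matching claim is false at threshold $1$, and the proposal, while it mentions $Z$, misidentifies the one function of $Z$ (supplying singleton fibres) that closes the gap.
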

The function $\gamma_i$ maps a function $f: X \to Y$ to the number of elements
of $Y$ that have at least $i$ elements of $X$ map to it; it is a sum of the
functions $\varphi_k$ for $k \ge i$.

\begin{proof}
  Theorem \ref{thm:main} again easily gives us that the $G_i$ are additive
  monotones for $i \in \N$, $i \ge 2$. In particular, note for condition
  \textit{(iii)} that pre-composing a function $f:X \to Y$ with an injection
  never increases the cardinality of the preimage of a point in $Y$, and that
  $f$ followed by an injection has preimages of points in the codomain that are
  either empty or equal to the preimage some point in $Y$, with no two points of
  the codomain sharing the same point in $Y$.

  It thus remains to prove the completeness of this family. Fix functions $f: X
  \to Y$ and $g: A \to B$, and suppose that for all $i \ge 2$ we have $G_i(f)
  \ge G_i(g)$. Again we wish to construct witnesses $Z \in \lvert \SetU\rvert$,
  $\xi_1,\xi_2 \in \mor(\InjU)$, and $j \in \mor(\SetU)$ such that $\xi_2 \circ
  (f \sqcup 1_Z) \circ \xi_1 = g \sqcup j$. There are many ways to construct
  such witnesses. We offer the following algorithm.

  Choose $Z$ to be a set of cardinality $\max\{0,\#B-\#Y\}$, $D$ to be a set of
  cardinality $\max\{0,\#Y-\#B\}$, and $j$ to be the unique function $j: \varnothing
  \to D$. This ensures that for all $i \in \N$, including $0$ and $1$,
  we have $\gamma_i(f \sqcup 1_Z) \ge \gamma_i(g \sqcup j)$. By definition, this
  means that for all $i \in \N$ we have
  \[
    \#\big\{y \in Y\sqcup Z\, \big\mid\, \# (f\sqcup 1_Z)^{-1}(y) \ge i\big\} \ge \#\big\{b \in B
      \sqcup D \,\big\mid\, \# (g \sqcup j)^{-1}(b) \ge i\big\}.
  \]
  This allows us to define an injection (in fact a bijection) $\xi_2: Y\sqcup Z
  \to B \sqcup D$ mapping each element $y \in Y \sqcup Z$ to an element $b \in
  B$ such that $\# (f \sqcup 1_Z)^{-1}(y) \ge \# (g\sqcup j)^{-1}(b)$. We then
  may choose an injection $\xi_1: A \to X \sqcup Z$ such that for all
  $a \in A$ we have $\xi_1(a) \in \big(\xi_2 \circ (f \sqcup 1_Z)\big)^{-1}(g(a))$.
  This proves the proposition.
\end{proof}

Analogous to the previous case, we reach the following characterisation of the
theory of resource convertibility for $\pcd(\SetU,\InjU)$.

\begin{cor}\label{corollary1:PreorderOfInjUSetU}
We have an isomorphism of ordered monoids 
\begin{align*} 
  G: \big(\lvert\pcd(\SetU,\InjU)\rvert, \succeq, \sqcup\big) &\longrightarrow
  \big( \mathrm{FinSupp}_\N(\N\setminus\{0,1\}), \ge, + \big); \\
  f &\longmapsto \big(i \mapsto G_i(f)\big).
\end{align*}
\end{cor}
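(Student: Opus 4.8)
The plan is to show that $G$ is an isomorphism of ordered monoids by verifying that it is a well-defined, order-preserving and order-reflecting monoid homomorphism which is moreover a bijection; most of this is handed to us directly by Proposition~\ref{def:measureInjuSetU}. Since each $G_i$ is an additive monotone and the monoid and order structures on $\mathrm{FinSupp}_\N(\N\setminus\{0,1\})$ are defined coordinatewise, the assembled map $G$ is automatically a monoid homomorphism sending the identity class to the zero function, and it is well defined on equivalence classes because each coordinate $G_i$ is. It genuinely lands in the finitely supported functions, since for a fixed $f\colon X\to Y$ we have $\gamma_i(f)=0$ as soon as $i>\#X$, so only finitely many coordinates are nonzero. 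Completeness of the family $\{G_i\}_{i\ge 2}$ is precisely the assertion that $[f]\succeq[g]$ if and only if $G_i(f)\ge G_i(g)$ for all $i\ge 2$, i.e.\ that $G$ is at once order-preserving and order-reflecting; order-reflection then forces injectivity on equivalence classes, since $G(f)=G(g)$ gives both $f\succeq g$ and $g\succeq f$, whence $[f]=[g]$. Thus the only remaining content is surjectivity of $G$.

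Surjectivity is where I expect the genuine work to lie. Given a target $\ell\in\mathrm{FinSupp}_\N(\N\setminus\{0,1\})$ I must exhibit a function $f$ with $\gamma_i(f)=\ell(i)$ for every $i\ge 2$, and I would build $f$ explicitly as a disjoint union of ``fan-in'' pieces in the spirit of the construction of $j_\ell$ in the discussion of Proposition~\ref{def:measureBijuSetU}: for each preimage-size $k$ I would specify how many codomain points should have a preimage of exactly $k$ elements, then realise each such block separately and take their disjoint union. The crucial subtlety — and the step I would treat most carefully — is that the $\gamma_i$ are \emph{cumulative} rather than exact counts; indeed $\gamma_i=\sum_{k\ge i}\varphi_k$, so that $\gamma_i(f)\ge\gamma_{i+1}(f)$ necessarily holds. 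Consequently the data one must feed to the fan-in construction are not the values $\ell(i)$ themselves but the successive differences $\ell(i)-\ell(i+1)$, read as the number of codomain points whose preimage has size exactly $i$. The main obstacle is therefore to guarantee that these differences constitute a legitimate building recipe; this is the one point at which the behaviour of $\gamma$ departs from that of the exact-count functions $\varphi$, and confirming that the construction can be carried through is the heart of the argument. Once $f$ is so constructed, a direct count gives $G(f)=\ell$, establishing bijectivity and hence that $G$ is an isomorphism of ordered monoids.
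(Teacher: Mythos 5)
Your reduction of the corollary to Proposition~\ref{def:measureInjuSetU} is correct as far as it goes: well-definedness on equivalence classes, the homomorphism property, landing in finitely supported functions (via $\gamma_i(f)=0$ for $i>\#X$), and the identification of completeness with order-preservation plus order-reflection (whence injectivity on classes) are all exactly right, and match what the paper implicitly intends by ``analogous to the previous case''. But the step you defer to the end --- surjectivity --- is not merely ``the heart of the argument''; it fails, and the subtlety you flagged is fatal rather than surmountable. Since the sets $\{y \mid \#f^{-1}(y)\ge i\}$ are nested, $\gamma_i(f)\ge\gamma_{i+1}(f)$ for \emph{every} $f$, so the image of $G$ consists only of non-increasing finitely supported sequences. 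The successive differences $\ell(i)-\ell(i+1)$ that your fan-in recipe needs can be negative: take $\ell$ with $\ell(3)=1$ and $\ell(i)=0$ otherwise. No function $f$ has $\gamma_2(f)=0$ and $\gamma_3(f)=1$, since $\gamma_2(f)=0$ forces $f$ to be injective. So $G$ is not surjective onto $\mathrm{FinSupp}_\N(\N\setminus\{0,1\})$. Note that the paper offers no proof of this corollary beyond the appeal to analogy with Corollary~\ref{corollary1:PreorderOfBijUSetU}, and the analogy breaks precisely where you sensed it would: the exact counts $\varphi_i$ of the bijection case are free, independent coordinates, while the cumulative counts $\gamma_i$ are constrained.

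What your argument does prove, once the fan-in construction is run on the (then non-negative) differences of a \emph{non-increasing} target, is that $G$ is an isomorphism of ordered monoids onto the submonoid $\big\{\ell \in \mathrm{FinSupp}_\N(\N\setminus\{0,1\}) \,\big\mid\, \ell(i)\ge\ell(i+1) \mbox{ for all } i\ge 2\big\}$ with pointwise order and addition; indeed for such $\ell$ one checks $\gamma_i\big(\bigsqcup_k j_k\big)=\sum_{k\ge i}\big(\ell(k)-\ell(k+1)\big)=\ell(i)$ by telescoping, where $j_k$ is a disjoint union of $\ell(k)-\ell(k+1)$ fan-ins of size $k$. Moreover the defect cannot be repaired by choosing a cleverer map: on $\mathrm{FinSupp}_\N(\N\setminus\{0,1\})$ the pointwise order coincides with the algebraic difference order ($x\ge y$ iff $x=y+z$ for some $z$ in the monoid), whereas on the image of $G$ it does not --- writing $e_k$ for the sequence with ones in positions $2,\dots,k$ and zeros elsewhere, one has $e_3\ge e_2$ pointwise, yet $e_3=e_2+z$ would force $z$ to be the indicator of $\{3\}$, which is not non-increasing. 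Since any isomorphism of ordered monoids preserves the difference order, no isomorphism with $\big(\mathrm{FinSupp}_\N(\N\setminus\{0,1\}),\ge,+\big)$ exists at all. Your instinct to treat the cumulative/exact distinction ``most carefully'' was exactly right; the correct outcome of that care is a counterexample to the stated corollary, not a completed construction.
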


\section{Some remarks on further directions}

While we have indicated how to construct additive monotones on resource theories
of parallel-combinable processes with discarding, there is work to be done to
understand their complete families better. In particular, an existence theorem or
otherwise for complete families of additive monotones would be of interest, as
well as a notion of minimally complete family of monotones. Some first steps
towards such results can be found in \cite[\textsection 6-7]{Fr}.

Observe also that the two examples of Section \ref{sec:ex} have an interesting
property: they in fact form a triple of inclusions of symmetric monoidal
categories 
\[
  \BijU \hooklongrightarrow \InjU \hooklongrightarrow \SetU.
\]
We might call this a doubly-partitioned process theory. This nested structure
seems to have been reflected in the construction of complete families of
additive monotones: we built one complete family from the other. We wonder
whether this could be done more generally.

A third salient question is the relationship between different methods of
constructing resource theories from partitioned process theories. A place to
start is perhaps to examine whether Theorem \ref{thm:main} has analogues for
related constructions.

\subsection*{Acknowledgements}
We thank Chris Heunen, Bob Coecke, Tobias Fritz, and Jamie Vicary for useful
conversations and encouragement, and an anonymous referee for detailed feedback
on a draft. BF thanks the Clarendon Fund and Hertford College, Oxford, for
their support. HNK thanks Wolfson College, Oxford and Mexico's National Council of
Science and Technology for their support.

\end{document}